\documentclass[10pt,conference]{IEEEtran}
\usepackage{spconf}
\usepackage{enumerate}
\usepackage{cite}
\usepackage{amsmath}
\usepackage{amscd}
\usepackage{amssymb}
\usepackage{latexsym}
\usepackage{array}
\usepackage{tabularx}
\usepackage[ruled]{algorithm}
\usepackage{algorithmic}
\usepackage{epsfig,subfigure}

\usepackage{helvet}

\usepackage{multirow}
\usepackage{bigstrut}
\usepackage{array}
\usepackage{epstopdf}
\usepackage{graphicx}
\usepackage{float}

\newcommand{\PreserveBackslash}[1]{\let\temp=\\#1\let\\=\temp}
\newcolumntype{C}[1]{>{\PreserveBackslash\centering}p{#1}}
\newcolumntype{R}[1]{>{\PreserveBackslash\raggedleft}p{#1}}
\newcolumntype{L}[1]{>{\PreserveBackslash\raggedright}p{#1}}

\newtheorem{ther}{Theorem}
\newtheorem{deft}{Definition}

\newtheorem{prop}{Proposition}

\begin{document}
%
\title{Piecewise Toeplitz Matrices-based Sensing for Rank Minimization}
%
%
%


%
%
%

%

\name{Kezhi Li \qquad Cristian R. Rojas \qquad Saikat Chatterjee \qquad  H\r{a}kan Hjalmarsson}


\address{ACCESS Linnaeus Centre, Royal Institute of Technology (KTH) \\
   \{kezhi, crro, sach, hjalmars\}@kth.se }

\maketitle

\maketitle
\begin{abstract}
This paper proposes a set of piecewise Toeplitz matrices as the linear mapping/sensing operator $\mathcal{A}: \mathbf{R}^{n_1 \times n_2} \rightarrow \mathbf{R}^M$ for recovering low rank matrices from few measurements. We prove that such operators efficiently encode the information so there exists a unique reconstruction matrix under mild assumptions. This work provides a significant extension of the compressed sensing and rank minimization theory, and it achieves a tradeoff between reducing the memory required for storing the sampling operator from $\mathcal{O}(n_1n_2M)$ to $\mathcal{O}(\max(n_1,n_2)M)$ but at the expense of increasing the number of measurements by $r$. Simulation results show that the proposed operator can recover low rank matrices efficiently with a reconstruction performance close to the cases of using random unstructured operators.

\end{abstract}

\begin{keywords}Rank minimization, Toeplitz matrix, compressed sensing, coherence
\end{keywords}

\section{Introduction}

As a dual of the compressed sensing (CS) problem \cite{donoho-cs}, the matrix rank minimization problem has been extensively studied in recent years \cite{RECHT-Guaranteed,Candes-RobustPCA,Markovsky-StructuredLowRank}. This problem arises in many fields such as system identification \cite{Markovsky-StructuredLowRank}, computer vision \cite{Candes-RobustPCA} and quantum state tomography \cite{Gross-QuantumState}, where notions of order, dimensionality or complexity can be expressed in terms of the rank of a matrix. Let $\mathbf{X}$ be an $n_1 \times n_2$ matrix of low rank rank
 $r$, the sampling operation can be expressed as
 \begin{equation}\label{eq:y=Avec(X)}
   \mathbf{y}= \mathcal{A}(\mathbf{X})= \left[ \langle \mathbf{A}_1, \mathbf{X} \rangle , \cdots, \langle \mathbf{A}_M, \mathbf{X} \rangle  \right]^T = \mathbf{A} \text{vec}(\mathbf{X}),
 \end{equation}
 where $\mathbf{y}$ is a vector of measurements, $\mathcal{A}: \mathbf{R}^{n_1 \times n_2} \rightarrow \mathbf{R}^M$ is a linear transformation, $\mathbf{A}$ denotes the transform in matrix format in which $\mathbf{A} \in \mathbf{R}^{M \times (n_1 \cdot n_2)}$ (without loss of generality we assume that $n_1<n_2$); and $ \langle \mathbf{A}_j, \mathbf{X} \rangle = \text{Tr}(\mathbf{A}_j^T \mathbf{X} ), j \in \{1,\cdots M\}$, ($\text{vec}$) is the operator that vectorizes the matrix $\mathbf{X}$ by concatenating the columns as a long vertical vector $\mathbf{x} \in \mathbf{R}^{N}, N=n_1 n_2$.

 In compressed sensing, besides conventional Gaussian/Bernoulli random sensing approaches, many structured/deterministic sensing matrices have been proven suitable for recovery of compressible signals. The interest in using structured sensing matrices in CS stems from the application needs, mainly due to their low complexity in computation and memory, as well as hardware implementation \cite{Duarte-Structured,CS-Toeplitz-algorithm-Yin,Kezhi-Convolutional-SP}.

 While we note several attempts in CS to use structured sensing matrices, contributions to low rank matrix reconstruction are more scarce.  There are some efforts such as in \cite{RECHT-Guaranteed, Forbes-OnIdentity,Gross-RecoveringLow-Rank}. An existing key condition of $\mathcal{A}$ is the so called rank restricted isometry property (r-RIP) \cite{RECHT-Guaranteed}. However, examining the r-RIP of a given operator $\mathcal{A}$ is NP-hard.  In this paper, we show that if the $\mathbf{A}$ consists of independent identically distributed (i.i.d.) random variables forming a piecewise Toeplitz structure, it is feasible to recover the objective low rank matrix from its measurements uniquely. Instead of studying its r-RIP, we convert the uniqueness problem of rank minimization into a compressed sensing problem, and analyze the performance bounds of proposed matrices by using the tools from structured sensing matrix analysis in CS. The extension is not trivial, since the vectorized low rank matrix is no longer sparse. By utilizing such technique we may reduce the memory required to store the sampling operator from $\mathcal{O}(n_1n_2M)$ to $\mathcal{O}(M\max(n_1,n_2))$ at the expense of a few measurements under mild assumptions.

The rest of the paper is organized as follows. In Section II we formulate the problem and introduce the proposed piecewise Toeplitz matrices. The main result and the corresponding proof are presented in Section III. Simulations are given in Section IV and finally Section V addresses the conclusion.

\subsection{Relations to Previous Works}
There are existing random or structured matrices for the rank minimization problem, such as \cite{RECHT-Guaranteed, Forbes-OnIdentity, Gross-RecoveringLow-Rank}. The idea of this paper different from the previous work is to use Toeplitz structured operators inspired from structured sensing matrices in CS \cite{Duarte-Structured,Haupt-Toeplitz}. After decomposing the low rank matrix, we analyze the uniqueness of recovering a block sparse vector (as defined in \cite{Eldar-Block-Spa}). In \cite{Haupt-Toeplitz} the authors also adopted the Gershgorin circle theorem to bound the eigenvalues of Toeplitz matrices, while in this paper the different coherence expression makes the analysis much more complicated. The idea of decomposing low rank matrix by its columns also relates to CUR matrix decomposition and the Nystrom method \cite{Mahoney-CUR}. In contrast, to these previous approaches, the core problem here is the unique reconstruction rather than the decomposition, so the coherence and RIP analyses are adopted.

\section{Piecewise Toeplitz matrix-based Sensing}

\subsection{Problem Formulation}

To exploit the low rank property, we suppose that the low rank matrix $\mathbf{X}=[\mathbf{x}_1, \mathbf{x}_2, \cdots, \mathbf{x}_{n_2}]$ has rank $r$ so that $r$ of its columns $\mathbf{x}_i$ can represent the remaining $(n_2-r)$ columns explicitly by their linear combination. Denote the selected $r$ columns as $\mathbf{x}^{\diamond}$, $\mathbf{x}^{\diamond} \subset \{\mathbf{x}_1, \mathbf{x}_2, \cdots, \mathbf{x}_{n_2} \}$, ${\diamond} \subset \{1,2, \cdots, n\}$, $\text{card}({\diamond}) = r$; and the remaining $(n_2-r)$ columns as $\mathbf{x}^{*}$, $* \subset \{1, \cdots, n_2\}$, $\text{card}(*) = n_2-r$. We call them primary columns and secondary columns, respectively. Then the $(n_2-r)$ secondary columns can be represented as
\begin{equation}
    \mathbf{x}_i = \sum_{j={1}}^{r} \alpha_{ij}\mathbf{x}_{{\diamond}_j},
\end{equation}
$i \in \{*_1, *_2, \cdots, *_{n_2-r} \}$. Its matrix multiplication form is \begin{footnotesize}
\begin{equation}\label{eq:*=ad}
\begin{array}{c} \left[ \begin{array}{c} \mathbf{x}_{*_1} \\ \mathbf{x}_{*_2} \\ \vdots \\ \mathbf{x}_{*_{n_2-r}} \end{array} \right] = \left\{\underbrace{ \left[ \begin{array}{ccc} \alpha_{{*_{1}}\diamond_{1}}  & \cdots & \alpha_{*_{1}\diamond_{r}}
\\  \alpha_{*_{2}\diamond_{1}}  & \cdots & \alpha_{*_{2}\diamond_{r}} \\ \vdots & \ddots & \vdots \\  \alpha_{*_{n_2-r}\diamond_{1}}  & \cdots & \alpha_{*_{n_2-r}\diamond_{r}} \end{array} \right]  } \otimes \mathbf{I}_{n_1} \right\} \left[ \begin{array}{c} \mathbf{x}_{\diamond_1} \\ \mathbf{x}_{\diamond_2} \\ \vdots \\ \mathbf{x}_{\diamond_{r}} \end{array} \right],\\ { \boldsymbol\alpha}  \end{array}
\end{equation} \end{footnotesize}
where $\otimes$ denotes the Kronecker product. Normally $r \ll n_2$, i.e. $\boldsymbol\alpha$ is a tall matrix. Thus $\mathbf{x}=\text{vec}(\mathbf{X})$ can also be decomposed in a block sparse manner as
\begin{equation}\label{eq:x=Psif}
\mathbf{x} = \mathbf{\Psi} \cdot \mathbf{f}
\end{equation}
where $\mathbf{\Psi}$ is a sparse matrix with block diagonal matrices $\alpha_{ij} \mathbf{I}_{n_1}$ or $\mathbf{I}_{n_1}$ of size $n_1 \times n_1$, $\mathbf{f}$ is a block $r$-sparse vector (adopting the definition in \cite{Eldar-Block-Spa}) $\mathbf{f} = \left[ \mathbf{0}, \cdots, \mathbf{x}_{\diamond_1}^T, \mathbf{0},  \cdots ,  \mathbf{x}_{\diamond_r}^T , \cdots \right]^T$.
Then (\ref{eq:y=Avec(X)}) can be written as:
\begin{equation}\label{eq:y=A Psi f}
   \mathbf{y}= \mathcal{A} (\mathbf{X}) = \mathbf{A} \mathbf{x} = \mathbf{A} \mathbf{\Psi} \mathbf{f} = \mathbf{\Theta} \mathbf{f},
\end{equation}
where ${\bf \Theta}={{\bf A}}{\bf \Psi}$. This block sparse problem has been studied by analyzing the block-restricted isometry constant \cite{Eldar-Block-Spa}. However, here the matrix $\mathbf{\Theta}$ is formed by the multiplication of $\mathbf{A}$ and an unknown structured sparsifying matrix $\mathbf{\Psi}$ in (\ref{eq:x=Psif}). And obviously this $\mathbf{\Psi}$ is neither unitary, nor can be constructed delicately from complete bases. It is an unknown block diagonal sparse matrix determined by the low rank matrix $\mathbf{X}$. The same low rank matrix may even lead to multiple $\mathbf{\Psi}$ and $\mathbf{f}$. Thus the conventional theory of block CS may encounter difficulties for such a problem.

\subsection{Piecewise Toeplitz Matrices}\label{sec:PTM}

\begin{deft}[\bf Piecewise Toeplitz]
A set of matrices $\{\mathbf{A}_1, \mathbf{A}_2, \cdots, \mathbf{A}_M\}$ of size $n_1 \times n_2$ are defined as piecewise Toeplitz matrices if
\begin{equation}\label{eq:a1=[a11a12]}
\begin{split}
\mathbf{a}_1= \text{vec}(\mathbf{A}_1)^T&= [\mathbf{a}_{11}^T,\mathbf{a}_{12}^T, \cdots, \mathbf{a}_{1n_2}^T], \\
&\vdots \\
\mathbf{a}_M= \text{vec}(\mathbf{A}_M)^T&= [\mathbf{a}_{M1}^T,\mathbf{a}_{M2}^T, \cdots, \mathbf{a}_{Mn_2}^T ],
\end{split}
\end{equation}
where $\mathbf{a}_{ij}$ denotes the $j$th column of the matrix $\mathbf{A}_i$, and their piecewise concatenation matrices
\begin{equation}\label{eq:A[i]}
\begin{array}{cccc}
\mathbf{A}[1] =\left[ \begin{array}{c}
\mathbf{a}_{11}^T \\
\vdots \\
\mathbf{a}_{M1}^T \\
\end{array} \right] &%
\cdots &
\mathbf{A}[n_2] =\left[ \begin{array}{c}
\mathbf{a}_{1n_2}^T \\
\vdots \\
\mathbf{a}_{Mn_2}^T \\
\end{array} \right]

\end{array}
\end{equation}
are all Toeplitz.
\end{deft}

\begin{prop}\label{prop:Theta expression}
For the measurement process $\mathbf{y}(j)= \langle \mathbf{A}_j, \mathbf{X} \rangle, j\in \{1, \cdots, M\}$, and a matrix $\mathbf{X}$ of size $n_1 \times n_2 $ and rank $r$, (\ref{eq:y=Avec(X)}) is equivalent to $\mathbf{y} = \mathbf{\Theta f} $, where $\mathbf{f}$ is a block $r$-sparse vector, and $ \mathbf{\Theta} $ of size $M \times (n_1n_2)$ has the structure $\mathbf{\Theta} =\left[ \begin{array}{cccccc}
\mathbf{\Theta}[1] &
\mathbf{\Theta}[2] &
\cdots &
\mathbf{\Theta}[i] &
\cdots &
\mathbf{\Theta}[n]
\end{array} \right]$ in which
\begin{equation}\label{eq:Theta0}
\mathbf{\Theta}[i]= \left\{ \begin{array}{ccl}
\mathbf{A}[i]+ \sum_{*}{\alpha_{*i}\mathbf{A}[*]} & \mbox{if} & i \in \{\diamond\}
\\ \mathbf{0} & \mbox{if} & i \notin \{\diamond\}
\end{array}\right.
\end{equation}
for all $i \in \{1, \cdots, n_2\}$. $\{\mathbf{A}[i]\}$ are matrices derived from $\{\mathbf{A}_j\}$ by concatenating their columns piecewisely as in (\ref{eq:A[i]}). $\{\diamond\} , \{*\}$ represent the sets of primary columns and secondary columns indexes, respectively. $\text{card}(\diamond)=r, \text{card}(*)=n_2-r$. $\mathbf{\Theta}[i]$ are also Toeplitz when $i \in \{\diamond\}$.
\end{prop}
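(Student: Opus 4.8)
The plan is to expand the trace--inner--product measurement model column by column, substitute the rank-$r$ linear dependence among the columns of $\mathbf{X}$, and re-collect terms. First I would write each scalar measurement as $\mathbf{y}(j)=\langle\mathbf{A}_j,\mathbf{X}\rangle=\text{Tr}(\mathbf{A}_j^T\mathbf{X})=\sum_{i=1}^{n_2}\mathbf{a}_{ji}^T\mathbf{x}_i$, where $\mathbf{a}_{ji}$ and $\mathbf{x}_i$ are the $i$-th columns of $\mathbf{A}_j$ and $\mathbf{X}$. Stacking over $j=1,\dots,M$ gives $\mathbf{y}=\sum_{i=1}^{n_2}\mathbf{A}[i]\,\mathbf{x}_i$, i.e.\ the block partition $\mathbf{A}=[\mathbf{A}[1]\ \cdots\ \mathbf{A}[n_2]]$ applied to $\mathbf{x}=\text{vec}(\mathbf{X})$ partitioned conformably into $n_1$-vectors.

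Next I would split the column indices into the primary set $\{\diamond\}$ and the secondary set $\{*\}$ and, for each $i\in\{*\}$, insert the rank-$r$ relation $\mathbf{x}_i=\sum_{j\in\{\diamond\}}\alpha_{ij}\mathbf{x}_j$ of~(\ref{eq:*=ad}). Interchanging the two finite summations and collecting the coefficient of each primary column $\mathbf{x}_i$, $i\in\{\diamond\}$, yields $\mathbf{y}=\sum_{i\in\{\diamond\}}\bigl(\mathbf{A}[i]+\sum_{*}\alpha_{*i}\mathbf{A}[*]\bigr)\mathbf{x}_i$. Defining $\mathbf{\Theta}[i]$ to be this bracketed matrix for $i\in\{\diamond\}$ and $\mathbf{0}$ otherwise, and letting $\mathbf{f}$ be the vector whose $i$-th length-$n_1$ block is $\mathbf{x}_i$ for $i\in\{\diamond\}$ and zero elsewhere, we get $\mathbf{y}=\mathbf{\Theta}\mathbf{f}$ with $\mathbf{f}$ block $r$-sparse, since only $r=\text{card}(\diamond)$ of its $n_2$ blocks are nonzero; this is the claimed factorization, and it matches $\mathbf{\Theta}=\mathbf{A}\mathbf{\Psi}$ with $\mathbf{\Psi}$ as in~(\ref{eq:x=Psif}).

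For the Toeplitz claim I would use that the $M\times n_1$ Toeplitz matrices form a linear subspace: a matrix is Toeplitz exactly when its $(k,\ell)$ entry depends only on $k-\ell$, a condition closed under addition and scalar multiplication. By the piecewise Toeplitz hypothesis every $\mathbf{A}[i]$ is Toeplitz, so for $i\in\{\diamond\}$ the matrix $\mathbf{\Theta}[i]=\mathbf{A}[i]+\sum_{*}\alpha_{*i}\mathbf{A}[*]$ is a finite linear combination of Toeplitz matrices and is therefore Toeplitz.

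I do not expect a real obstacle: the argument is essentially a reindexing of a finite double sum. The only points needing care are keeping the column and measurement indices aligned so that each coefficient $\alpha_{*i}$ multiplies the correct block (matching the Kronecker structure $\boldsymbol\alpha\otimes\mathbf{I}_{n_1}$ in~(\ref{eq:*=ad})), and noting that the factorization is stated relative to a fixed, generally non-unique choice of primary columns, so that no uniqueness of $\mathbf{\Psi}$ or $\mathbf{f}$ is claimed here.
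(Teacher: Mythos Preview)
Your proposal is correct and follows essentially the same approach as the paper: the paper writes $\mathbf{y}=[\mathbf{A}[1]\ \cdots\ \mathbf{A}[n_2]]\,\mathbf{\Psi}\,\mathbf{f}=\mathbf{\Theta}\mathbf{f}$ and declares the block form ``straightforward to verify,'' which is exactly the column-wise expansion and reindexing you spell out, and it closes with the same observation that a linear combination of Toeplitz matrices is Toeplitz. Your write-up simply makes explicit the computation the paper leaves implicit.
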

\begin{proof}
Because $\mathbf{X}$ has rank $r$,
\begin{equation}\label{eq:y=AAAPsif}
\mathbf{y} = \left[ \mathbf{A}[1] \ \mathbf{A}[2] \ \cdots \  \mathbf{A}[n] \right]  \mathbf{\Psi}
 \mathbf{f} = \mathbf{\Theta f}.
\end{equation}
Then it is straightforward to verify the expression in \emph{Prop.} 1. Because $\mathbf{A}[i]$ are Toeplitz, $\mathbf{\Theta}[i]$ must be Toeplitz as well when $i \in \{\diamond\}$. 
\end{proof}

\textbf{Remark}: Since the decomposition $\mathbf{x}= \mathbf{\Psi f}$ is not unique, $\mathbf{\Theta}$ have different expressions corresponding to various $\mathbf{f}$, which distinguishes (\ref{eq:y=AAAPsif}) from CS with multiple solutions $\mathbf{f}$. Fortunately what we need to recover is not $\mathbf{f}$ but $\mathbf{x}$. Multiple $\mathbf{f}$ may lead to a unique solution $\mathbf{x}$.

\section{Main Result and Proof}

\begin{prop}[\bf Unique Recovery]\label{prop:unique-rec}
The reconstruction of matrix $\mathbf{X}$ with size $n_1 \times n_2$ and rank $r$ in (\ref{eq:y=A Psi f}) has a unique solution $\hat{\mathbf{X}}$ if $\mathbf{\Theta f} \neq 0$ holds for every $\mathbf{f} \neq 0$ in (\ref{eq:y=A Psi f}) which is block $2r$-sparse, where $\mathbf{\Theta }$ has the structure in (\ref{eq:Theta0}).
\end{prop}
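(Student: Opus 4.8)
The plan is to argue by contradiction via a null-space (spark-type) condition, mirroring the classical fact that $2k$-sparse injectivity of a compressed-sensing matrix implies unique $k$-sparse recovery, but carried over to the block / low-rank setting built in Section II. Suppose two matrices $\hat{\mathbf{X}}_1 \neq \hat{\mathbf{X}}_2$, each of rank at most $r$, are both consistent with the same data, i.e. $\mathcal{A}(\hat{\mathbf{X}}_1) = \mathcal{A}(\hat{\mathbf{X}}_2) = \mathbf{y}$. Put $\mathbf{Z} = \hat{\mathbf{X}}_1 - \hat{\mathbf{X}}_2 \neq \mathbf{0}$. By subadditivity of rank, $\mathrm{rank}(\mathbf{Z}) =: r' \le \mathrm{rank}(\hat{\mathbf{X}}_1) + \mathrm{rank}(\hat{\mathbf{X}}_2) \le 2r$, and by linearity of $\mathcal{A}$ we have $\mathcal{A}(\mathbf{Z}) = \mathbf{A}\,\mathrm{vec}(\mathbf{Z}) = \mathbf{0}$.

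Next I would feed $\mathbf{Z}$ through the column decomposition of Section II. Since $\mathrm{rank}(\mathbf{Z}) = r' \le 2r$, there are $r'$ columns of $\mathbf{Z}$ spanning its column space; declare these the primary columns of $\mathbf{Z}$ (padding arbitrarily up to $2r$ indices if a fixed block-sparsity level is preferred), which, exactly as in (\ref{eq:*=ad})--(\ref{eq:y=A Psi f}), produces a factorization $\mathrm{vec}(\mathbf{Z}) = \boldsymbol{\Psi}_{\mathbf{Z}} \mathbf{f}_{\mathbf{Z}}$ with $\mathbf{f}_{\mathbf{Z}}$ block $2r$-sparse and $\boldsymbol{\Psi}_{\mathbf{Z}}$ of the block-diagonal form described after (\ref{eq:x=Psif}). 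Here $\mathbf{f}_{\mathbf{Z}} \neq \mathbf{0}$: the blocks of $\boldsymbol{\Psi}_{\mathbf{Z}}$ on the primary indices are identity matrices $\mathbf{I}_{n_1}$, so $\boldsymbol{\Psi}_{\mathbf{Z}}$ restricted to the block support is injective, and $\mathrm{vec}(\mathbf{Z}) \neq \mathbf{0}$ forces $\mathbf{f}_{\mathbf{Z}} \neq \mathbf{0}$ (equivalently, at least one primary column of $\mathbf{Z}$ is nonzero, else $\mathbf{Z}=\mathbf{0}$). Substituting into $\mathcal{A}(\mathbf{Z}) = \mathbf{0}$ and invoking Proposition \ref{prop:Theta expression} with $2r$ in place of $r$ yields $\mathbf{0} = \mathbf{A}\boldsymbol{\Psi}_{\mathbf{Z}}\mathbf{f}_{\mathbf{Z}} = \boldsymbol{\Theta}\mathbf{f}_{\mathbf{Z}}$ with $\boldsymbol{\Theta}$ of the structure (\ref{eq:Theta0}): a nonzero block $2r$-sparse vector in the null space of such a $\boldsymbol{\Theta}$, contradicting the hypothesis. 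Hence no two distinct rank-$r$ matrices share the measurements, so $\hat{\mathbf{X}}$ is unique.

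The point needing the most care — and worth spelling out because of the Remark after Proposition \ref{prop:Theta expression} — is the quantifier on $\boldsymbol{\Theta}$: since the decomposition $\mathbf{x} = \boldsymbol{\Psi}\mathbf{f}$ is not unique, the hypothesis must be read as ``$\boldsymbol{\Theta}\mathbf{f} \neq \mathbf{0}$ for every block-$2r$-sparse $\mathbf{f} \neq \mathbf{0}$ and every $\boldsymbol{\Theta}$ of the form (\ref{eq:Theta0}) compatible with the support of $\mathbf{f}$''. The argument then only needs that at least one admissible pair $(\boldsymbol{\Psi}_{\mathbf{Z}}, \mathbf{f}_{\mathbf{Z}})$ exists for $\mathbf{Z}$, which the rank bound $\mathrm{rank}(\mathbf{Z}) \le 2r$ supplies. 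A minor side condition, which I would state explicitly, is $2r \le n_2$ (otherwise every rank-$r$ matrix is trivially recoverable and the block-$2r$-sparsity constraint is vacuous). Everything else is routine linear algebra, so the proof is short once this reduction to a block-sparse null-space condition is in place.
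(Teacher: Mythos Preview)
Your proposal is correct and follows essentially the same argument as the paper: assume two distinct rank-$r$ solutions, form their difference (rank at most $2r$), decompose it via the primary/secondary column construction into $\boldsymbol{\Psi}'\mathbf{f}'$ with $\mathbf{f}'$ block $2r$-sparse and nonzero, and obtain $\boldsymbol{\Theta}\mathbf{f}'=\mathbf{0}$, contradicting the hypothesis. Your write-up is in fact more explicit than the paper's on why $\mathbf{f}_{\mathbf{Z}}\neq\mathbf{0}$ and on the universal quantifier over admissible $\boldsymbol{\Theta}$, both of which the paper leaves implicit or relegates to its Remark.
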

\begin{proof}
%
%
%
Assume that there is a new solution $\mathbf{X}^*$ to (\ref{eq:y=A Psi f}) with $\text{rank}(\mathbf{X}^*) \leq r$, $\mathbf{X}^* \neq \hat{\mathbf{X}}$, which means that
\begin{equation}
\hat{\mathbf{x}}= \hat{\mathbf{\Psi}} \hat{\mathbf{f}},  \mathbf{x}^*= \mathbf{\Psi}^* \mathbf{f}^*, \hat{\mathbf{x}} \neq \mathbf{x}^*.
 \end{equation}
Let $\mathbf{X}' = \mathbf{X}^*-\hat{\mathbf{X}}$, where $\mathbf{X}'$ is a nonzero matrix of rank at most $2r$. Then there must exist an $\mathbf{f}'$ that is block $2r$-sparse such that $\mathbf{ A \Psi}' \mathbf{f}' =\mathbf{0}$, where $\mathbf{\Psi}' \mathbf{f}' = \mathbf{X}'$, which contradicts the assumption. Please note that $\mathbf{f}' \neq \mathbf{f}^*-\hat{\mathbf{f}}$ if $ \mathbf{\Psi}^*\neq \hat{\mathbf{\Psi}} $. The result is arrived from the above analyses.
\end{proof}
\textbf{Remark}: \emph{Prop.} \ref{prop:unique-rec} is an extension of the uniqueness guarantee from a sparse vector in CS to a low rank matrix. $\mathbf{\Theta}$ depends on the unknown matrix $\mathbf{X}$, $\mathbf{f}$ is not unique. However, because $\mathbf{X}'$ has rank at most $2r$, it must be decomposed into $\mathbf{\Psi}'$ and a $2r$-sparse vector $\mathbf{f}'$, which is in contradiction to the assumption.

\begin{prop}[\bf $\epsilon$ Bound]\label{prop:epsilon_bound}
Consider $\mathbf{\Theta}= \mathbf{A \Psi} $ with the structure in (\ref{eq:Theta0}). Denote by $\mathbf{\Theta}[\diamond]$ the submatrix formed by retaining the column blocks of $\mathbf{\Theta}$ indexed by $\diamond$. If the normalized Gram matrix $\mathbf{G}$ of every $\mathbf{\Theta}[\diamond]$ has bound
\begin{equation}
\begin{split}
|g_{ii}-1| \leq \epsilon_1,\\
 \mathcal{R}_i =\sum_{ \substack{ j=1 \\ j \neq i} }{|g_{ij}|} \leq \epsilon_2,
 \end{split}
 \end{equation}
 with some positive values $0<\epsilon_1, \epsilon_2<1$, then the eigenvalues of $\mathbf{G}(\mathbf{\Theta}[\diamond])$ are bounded by $(1-\epsilon, 1+\epsilon)$, $\epsilon=\epsilon_1+ \epsilon_2$, and (\ref{eq:y=A Psi f}) has the unique block $2r$-sparse solution $\hat{\mathbf{f}}$ when $\epsilon$ is bounded by the RIP constant. 
\end{prop}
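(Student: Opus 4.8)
\medskip
\noindent\textbf{Proof sketch (proposal).}
The plan is to chain together three standard facts: (i) the Gershgorin circle theorem, which converts the coherence-type bounds $|g_{ii}-1|\le\epsilon_1$ and $\mathcal{R}_i\le\epsilon_2$ into an eigenvalue bound on every sub-Gram matrix; (ii) the elementary equivalence between such an eigenvalue bound and the block restricted isometry property (block-RIP) of $\mathbf{\Theta}$; and (iii) the fact that a sufficiently small block-RIP constant forces $\mathbf{\Theta}\mathbf{f}\neq 0$ for every nonzero block-$2r$-sparse $\mathbf{f}$, which together with Proposition~\ref{prop:unique-rec} yields uniqueness of $\hat{\mathbf{X}}$.

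First I would apply Gershgorin's theorem. Since $\mathbf{G}(\mathbf{\Theta}[\diamond])$ is real symmetric, its eigenvalues are real and each eigenvalue $\lambda$ lies in a disc centered at some diagonal entry $g_{ii}$ with radius $\mathcal{R}_i=\sum_{j\neq i}|g_{ij}|$. Hence $|\lambda-1|\le|\lambda-g_{ii}|+|g_{ii}-1|\le\mathcal{R}_i+\epsilon_1\le\epsilon_2+\epsilon_1=\epsilon$, so $\mathrm{spec}\big(\mathbf{G}(\mathbf{\Theta}[\diamond])\big)\subset(1-\epsilon,1+\epsilon)$. This step is purely mechanical.

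Next I would translate this bound into block-RIP. For any vector $\mathbf{v}$ whose support is confined to the block index set $\diamond$, writing $\mathbf{v}_{\diamond}$ for its nonzero part, one has $\|\mathbf{\Theta}\mathbf{v}\|_2^2=\mathbf{v}_{\diamond}^{T}\mathbf{G}(\mathbf{\Theta}[\diamond])\mathbf{v}_{\diamond}$ (up to the chosen normalization), which the eigenvalue bound sandwiches between $(1-\epsilon)\|\mathbf{v}_{\diamond}\|_2^2$ and $(1+\epsilon)\|\mathbf{v}_{\diamond}\|_2^2$. Requiring the hypothesis for all admissible $\diamond$ of cardinality $2r$ is exactly the statement that $\mathbf{\Theta}$ satisfies the block-RIP of order $2r$ with constant $\delta_{2r}\le\epsilon$. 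Then, as in \cite{Eldar-Block-Spa}, $\delta_{2r}<1$ implies $\|\mathbf{\Theta}\mathbf{f}\|_2^2\ge(1-\delta_{2r})\|\mathbf{f}\|_2^2>0$ for every nonzero block-$2r$-sparse $\mathbf{f}$; invoking Proposition~\ref{prop:unique-rec}, the rank-$r$ matrix is then the unique solution of (\ref{eq:y=A Psi f}), and when $\epsilon$ is below the sharper threshold demanded by the convex relaxation used, that relaxation recovers it.

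The main obstacle is not the Gershgorin estimate but the quantifier bookkeeping in the translation step: because $\mathbf{\Theta}=\mathbf{A}\mathbf{\Psi}$ is not a fixed matrix but is induced by the unknown $\mathbf{X}$ (in fact by the difference $\mathbf{X}'$ of feasible solutions, of rank at most $2r$), and the decomposition $\mathbf{X}'=\mathbf{\Psi}'\mathbf{f}'$ is itself non-unique, one must argue that the coherence/eigenvalue bound holds \emph{uniformly} over all such data-dependent $\mathbf{\Theta}[\diamond]$ with $|\diamond|=2r$, and that the $\mathbf{f}'$ produced in the proof of Proposition~\ref{prop:unique-rec} is genuinely block-$2r$-sparse so that the order-$2r$ block-RIP applies to it. Securing that uniform bound is exactly where the structure of the piecewise Toeplitz ensemble $\{\mathbf{A}_j\}$ (i.i.d.\ generators, Toeplitz concatenations) and a coherence analysis more delicate than that of \cite{Haupt-Toeplitz} enter, and it is the step I expect to require the most work.
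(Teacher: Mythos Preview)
Your proposal is correct and follows exactly the route the paper takes: apply the Gershgorin circle theorem to get the eigenvalue bound $(1-\epsilon,1+\epsilon)$, reinterpret this as a block-RIP constant, and then invoke Proposition~\ref{prop:unique-rec}. The paper's own proof is in fact just a one-line pointer to Gershgorin, RIP, and Proposition~\ref{prop:unique-rec}; your extra discussion of the quantifier bookkeeping over data-dependent $\mathbf{\Theta}[\diamond]$ is really a preview of the work done in the main theorem rather than something needed for this proposition itself.
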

\begin{proof}
The proof is based on the Gershgorin circle theorem \cite{Horn-Matrix}, and can be derived from the RIP and \emph{Prop.} \ref{prop:unique-rec}.
\end{proof}

Now we give the main result of this paper. The details of \textit{Condition 1} and \textit{Assumption 1} are provided in the Appendix.

\begin{ther}[\bf Main Result]
Consider the measurements $\mathbf{y}(j)= \langle \mathbf{A}_j, \mathbf{X} \rangle, j\in \{1, \cdots, M\}$. Let $\mathbf{A}_j$ be piecewise random Toeplitz matrices whose entries satisfy \textit{Condition 1}. $\mathbf{X}$ is a rank $r$ matrix satisfying \textit{Assumption 1}, when $r^2<\mathcal{O}(n_1)$, $M \geq \mathcal{O}\left(r^2(n_1 +n_2) \log(n_1n_2)\right)$, then there exists a constant $c>0$ such that for any fixed $\mathbf{X}$ has a unique solution $\hat{\mathbf{X}}=\mathbf{X}$ with probability exceeding $1- \exp{\left(  - \frac{c M  }{ r^2 n_1^2} \right)}$ as $n_2 =n_1^2$.
\end{ther}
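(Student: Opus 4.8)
\emph{Proof plan.} The plan is to chain the three preceding propositions and then supply the one missing probabilistic ingredient. By \emph{Prop.}~\ref{prop:unique-rec}, unique recovery of a fixed $\mathbf{X}$ follows once ``$\mathbf{\Theta f}\neq\mathbf{0}$ for every nonzero block $2r$-sparse $\mathbf{f}$'' holds, and by \emph{Prop.}~\ref{prop:epsilon_bound} the latter follows, through the Gershgorin circle theorem, from the bounds $|g_{ii}-1|\le\epsilon_1$ and $\mathcal{R}_i=\sum_{j\neq i}|g_{ij}|\le\epsilon_2$ on the normalized Gram matrix $\mathbf{G}(\mathbf{\Theta}[\diamond])$ of the active sub-operator, with $\epsilon_1+\epsilon_2$ below a valid block-RIP constant. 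So the whole theorem reduces to showing that, for the stated $M$, these two bounds hold with probability at least $1-\exp(-cM/(r^2n_1^2))$, uniformly over the family of active sub-operators that can arise from a difference matrix of rank at most $2r$.

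\textbf{Describing and reducing that family.} By \emph{Prop.}~\ref{prop:Theta expression}, once a set $\diamond$ of $2r$ primary column-blocks is fixed, each active block is $\mathbf{\Theta}[i]=\mathbf{A}[i]+\sum_{*}\alpha_{*i}\mathbf{A}[*]$, i.e.\ a linear combination of the $M\times n_1$ Toeplitz matrices $\mathbf{A}[k]$ built from i.i.d.\ entries obeying \textit{Condition 1}; since linear combinations of Toeplitz matrices are Toeplitz, $\mathbf{\Theta}[\diamond]$ is a concatenation of $2r$ Toeplitz $M\times n_1$ blocks, parametrized by the $\binom{n_2}{2r}$ choices of $\diamond$ together with the coefficients $\boldsymbol\alpha$, whose range \textit{Assumption 1} keeps bounded. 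I would handle the $\binom{n_2}{2r}$ supports by a union bound and the continuum of $\boldsymbol\alpha$ by a standard net argument over that bounded range; both cost only logarithmic factors, which is where the $\log(n_1n_2)$ in the sample bound enters.

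\textbf{Entrywise concentration.} For a fixed $(\diamond,\boldsymbol\alpha)$ every entry $g_{ij}$ of $\mathbf{G}(\mathbf{\Theta}[\diamond])$ is a bilinear form in the underlying i.i.d.\ sequences: a normalized squared column norm on the diagonal, with expectation $1$ after normalization, and a $\boldsymbol\alpha$-weighted sum of Toeplitz correlations at various lags off the diagonal, with zero (or, using \textit{Assumption 1}, controllably small) mean. I would then invoke a Hanson--Wright / sub-exponential tail bound---the same mechanism that makes the Gershgorin argument work in \cite{Haupt-Toeplitz}, now applied to a weighted sum of overlapping Toeplitz correlations rather than a single inner product---to get $\Pr\{|g_{ij}-\mathbb{E}g_{ij}|>t\}\le 2\exp(-c'Mt^2/K)$, with $K$ a constant depending on the $\|\boldsymbol\alpha\|$ bound from \textit{Assumption 1}. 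Since a row of $\mathbf{G}(\mathbf{\Theta}[\diamond])$ carries at most $2rn_1$ off-diagonal entries, taking $t\asymp \epsilon_2/(rn_1)$ gives a per-entry failure probability $\exp(-c'' M/(Kr^2n_1^2))$, while the diagonal bound $|g_{ii}-1|\le\epsilon_1$ needs only an $O(1)$ deviation and is far cheaper. A union bound over the $\mathcal{O}(r^2n_1^2)$ entries, the $\binom{n_2}{2r}$ supports, and the $\boldsymbol\alpha$-net, together with $M\ge\mathcal{O}\!\big(r^2(n_1+n_2)\log(n_1n_2)\big)$, produces the claimed probability $1-\exp(-cM/(r^2n_1^2))$; the hypothesis $r^2<\mathcal{O}(n_1)$ is precisely what makes the required $\epsilon_2<1$ attainable with this many measurements, and specializing $n_2=n_1^2$ collapses the polynomial factors into the stated form.

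\textbf{Main obstacle.} The hard part will be the entrywise concentration: in contrast with plain Toeplitz compressed sensing, the columns of $\mathbf{\Theta}[i]$ are data-dependent linear combinations of shifted random vectors, so the relevant coherence is itself a sum of many correlated Toeplitz correlations, and its tail has to be controlled simultaneously with the random denominators used to normalize $\mathbf{G}$. Keeping the effective variance of these bilinear forms down to $O(1/M)$ up to a $\|\boldsymbol\alpha\|$-dependent constant---so that the sample complexity does not inflate beyond $r^2(n_1+n_2)\log(n_1n_2)$---is the real content of the proof, and it is exactly where \textit{Assumption 1} on the structure of $\mathbf{X}$ must be brought in.
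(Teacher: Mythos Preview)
Your high-level chain---use \emph{Prop.}~\ref{prop:unique-rec} and \emph{Prop.}~\ref{prop:epsilon_bound} to reduce to Gershgorin bounds on $\mathbf{G}(\mathbf{\Theta}[\diamond])$, then prove those bounds by entrywise concentration and a union bound---is exactly the paper's route, and your calibration $t\asymp \epsilon_2/(rn_1)$ leading to $\exp(-cM/(r^2n_1^2))$ is the right arithmetic. Two points, however, separate your plan from a working proof and from what the paper actually does.

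\textbf{You have misread \textit{Assumption 1}.} It does not bound the range of $\boldsymbol\alpha$; it declares the $\alpha_{*\diamond}$ to be i.i.d.\ zero-mean random variables with variance $1/r$ (the \emph{statistical low rank property}). This matters because the off-diagonal entries of $\mathbf{G}$ are \emph{not} mean-zero in $\mathbf{A}$: for columns in different blocks but the same internal index one gets $\mathbb{E}\big(\theta[\diamond_i,q](k)\,\theta[\diamond_j,q](k)\big)=\sum_{*}\alpha_{*\diamond_i}\alpha_{*\diamond_j}\,\sigma^2=:\kappa_{ij}\sigma^2$, and after normalization by $\gamma^2\approx 1+\sum_*\alpha_{*\diamond_i}^2$ this bias $\kappa_{ij}/\gamma^2$ is \emph{not} uniformly small over a bounded set of $\boldsymbol\alpha$---for adversarial $\boldsymbol\alpha$ it can be $\Theta(1)$. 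Hence an $\epsilon$-net over $\boldsymbol\alpha$ cannot close the argument. The paper instead exploits the randomness of $\boldsymbol\alpha$: with $\alpha_{*\diamond}$ i.i.d.\ mean-zero variance $1/r$, $\kappa_{ij}$ concentrates near $0$ and $\gamma_i^2$ near its mean, which is what the paper means by ``bounding bias factor $\gamma_i^2$ and $\kappa_{ij}$.'' Your sentence ``zero (or, using \textit{Assumption 1}, controllably small) mean'' is pointing at the right obstacle but with the wrong mechanism.

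\textbf{The concentration step is more structured than a single Hanson--Wright call.} The paper does a four-case split on $(p_1,q_1)$ versus $(p_2,q_2)$: diagonal; different block, same internal index; same block, different index; different block, different index. In each case it applies Hoeffding's inequality to $\sum_k \theta[\cdot](k)\,\theta[\cdot](k)$, and for the two cases with $q_1\neq q_2$ the Toeplitz structure makes consecutive summands dependent, so the paper uses a ``divide and conquer'' partition of $\{1,\dots,M\}$ into two interleaved index sets on which the summands are independent (this is the analogue of the trick in \cite{Haupt-Toeplitz}, but here applied to a weighted superposition of Toeplitz blocks). A generic quadratic-form bound would need you to track the spectral norm of the coefficient matrix built from the $\alpha$'s and the Toeplitz shifts; the paper's elementary decoupling avoids that bookkeeping. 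Either tool can be made to work, but you should be explicit that the dependence is handled by this two-group splitting rather than assumed away.
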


\begin{proof}
The proof exploits inner products of any two columns of $\mathbf{\Theta}$ in order to bound the eigenvalues. Denote the $q$th column in matrix $\mathbf{A}[p]$ as $\mathbf{A}[p,q], p \in \{1,\cdots, n_2\}, q \in \{1,\cdots, n_1\}$. For the entries in one row of the Gram matrix $\mathbf{G}$, there are four circumstances of non-zero $\mathbf{\theta}[p_1, q_1]^T \mathbf{\theta}[p_2 ,q_2]$: (1) $p_1 =p_2, q_1 =q_2 $; (2) $p_1 \neq p_2, q_1 =q_2 $; (3)$p_1 =p_2, q_1\neq q_2 $; (4) $p_1 \neq p_2, q_1\neq q_2 $. We will analyze them case by case.

(1) When $p_1 =p_2, q_1 =q_2 $, they are the diagonal entries of the Gram matrix. Without loss of generality, we calculate $\mathbf{\theta}_{\diamond_i q}^T \mathbf{\theta}_{\diamond_i q}, i \in \{1, \cdots, r\}, q \in \{1, \cdots, n_1\}$,  which implies $p_1= \diamond_i, q_1 = q_2=q$.
 Following the expressions in \emph{Prop.} \ref{prop:Theta expression}, the $((\diamond_i-1) \cdot n_1 +q)$th column of $\mathbf{\Theta}$ can be calculated is
 \begin{equation}
 \mathbf{\theta}[\diamond_i, q]= \mathbf{A}[\diamond_i, q] + \sum_{*}{\alpha_{* \diamond_i} \mathbf{A}[*, q]}.
\end{equation}
Suppose $|\mathbf{\theta}[\diamond_i, q](k)| \leq a$ where $a$ is a positive bound. There is some positive value $\gamma_{i}$ that $\mathbb{E}(\theta[\diamond_i, q]^2(k)) = \gamma_{i}^2 \sigma^2$ for every $k \in \{1, \cdots, M\}$, $\sigma^2=1/M$. By exploiting Hoeffding's inequality we have
\begin{equation}
\text{Pr}\left\{ \left| \sum_{k=1}^M{\theta[\diamond_i, q]^2(k)} - \gamma_i^2 \sigma^2 M \right| \geq t_0 \right\}   \leq 2 \exp{\left( -\frac{2 t_0^2 }{M a^4} \right) }.
\end{equation}

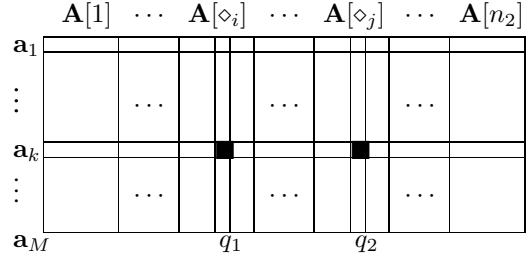
\begin{figure}[t]\label{fig:matrix1}
   \centering
\setlength{\unitlength}{2mm}
\begin{picture}(20,17)(8,0)
\linethickness{0.075mm}

\multiput(2,1)(5,0){2}%
{\line(0,1){13}}
\multiput(11,1)(5,0){2}%
{\line(0,1){13}}
\multiput(20,1)(5,0){2}%
{\line(0,1){13}}
\multiput(29,1)(5,0){2}
{\line(0,1){13}}

\put(2,1){\line(1,0){32.01}}
\put(2,6){\line(1,0){32.01}}
\put(2,7){\line(1,0){32.01}}
\put(2,13){\line(1,0){32.01}}
\put(2,14){\line(1,0){32.01}}

\put(0,0){$\mathbf{a}_M$}
\put(0,13){$\mathbf{a}_1$}
\put(0,3){$\vdots$}
\put(0,6){$\mathbf{a}_k$}
\put(0,9){$\vdots$}

\put(8,15){$\cdots$}
\put(17,15){$\cdots$}
\put(26,15){$\cdots$}
\put(8,9){$\cdots$}
\put(17,9){$\cdots$}
\put(26,9){$\cdots$}
\put(8,3){$\cdots$}
\put(17,3){$\cdots$}
\put(26,3){$\cdots$}

\put(3.2,15){$\mathbf{A}[1]$}
\put(11.6,15){$\mathbf{A}[\diamond_{i}]$}
\put(20.6,15){$\mathbf{A}[\diamond_{j}]$}
\put(29.6,15){$\mathbf{A}[n_2]$}

\multiput(13.4,1)(1,0){2}%
{\line(0,1){13}}
\multiput(22.4,1)(1,0){2}%
{\line(0,1){13}}
\put(13.7,0){$q_1$}
\put(22.7,0){$q_2$}

\put(0,9){$\vdots$}


\put(13.4,5.9){$\blacksquare$}
\put(22.4,5.9){$\blacksquare$}

\end{picture}
   \centering
  \caption{The schematic diagram of matrix $\mathbf{A}$. $\mathbf{a}_k, \mathbf{A}[i]$ are defined in \textit{Deft. 1}. $q_1$ and $q_2$ are internal column indexes. The two entries with black squares are denoted as $\mathbf{A}[\diamond_{i},q_1](k), \mathbf{A}[\diamond_{j},q_2](k)$, respectively.
  }\label{fig:matrix1}
\end{figure}

(2) When $p_1 \neq p_2, q_1 = q_2 $, they represent two columns in different block $\mathbf{\Theta}[p_1], \mathbf{\Theta}[p_2]$ but have relatively the same internal positions. Let $p_1 = \diamond_i, p_2 = \diamond_j, q_1=q_2= q$, the two columns $\mathbf{\theta}[\diamond_i, q], \mathbf{\theta}[\diamond_j, q]= \mathbf{A}[\diamond_j, q]$ can be expressed as
 \begin{equation}
 \begin{split}
 \mathbf{\theta}[\diamond_i, q]= \mathbf{A}[\diamond_i, q] + \sum_{*}{\alpha_{* \diamond_i} \mathbf{A}[*, q]} \\
 \mathbf{\theta}[\diamond_j, q]= \mathbf{A}[\diamond_j, q] + \sum_{*}{\alpha_{* \diamond_j} \mathbf{A}[*, q]}.
 \end{split}
\end{equation}
Because all entries in $\mathbf{A}$ are i.i.d. in different blocks,
\begin{equation}
\begin{split}
& \mathbb{E}{\left( \mathbf{\theta}[\diamond_i, q](k) \cdot \mathbf{\theta}[\diamond_j, q](k) \right) } \\
&=\mathbb{E}{\left( \sum_{*}{\alpha_{* \diamond_i} \mathbf{A}[*, q](k)} \sum_{*}{\alpha_{* \diamond_j} \mathbf{A}[*, q](k)} \right) } \\
&= \sum_{*}{\alpha_{* \diamond_i} \alpha_{* \diamond_j} \cdot \sigma^2}
\end{split}
\end{equation}
Although $\mathbf{\theta}[\diamond_i, q](k),  \mathbf{\theta}[\diamond_j, q](k)$ are dependent due to the mutual combination terms, we still can use Hoeffding's inequality to bound the summation, because for different $k$, $\mathbf{\theta}[\diamond_i, q](k) \cdot  \mathbf{\theta}[\diamond_j ,q](k)$ are i.i.d.. Let $\kappa_{ij}=  \sum_{*}{\alpha_{* \diamond_i} \alpha_{* \diamond_j}} $, then there exists some positive $t_1$ such that
\begin{equation}
\text{Pr}{\left\{ \left| \mathbf{\theta}[\diamond_i, q]^T \mathbf{\theta}[\diamond_j ,q] - \kappa_{ij}\sigma^2 M \right| \geq t_1 \right\} } \leq 2\exp{\left( - \frac{t_1^2}{2M a^4} \right)}.
\end{equation}


(3) When $p_1 = p_2, q_1 \neq q_2 $, the two columns are in the same block $p_1 = p_2= \diamond_i, i\in\{1, \cdots r\}$ with different internal index $q_1, q_2$:
 \begin{equation}
 \begin{split}
 \mathbf{\theta}[\diamond_i, q_1]= \mathbf{A}[\diamond_i ,q_1] + \sum_{*}{\alpha_{* \diamond_i} \mathbf{A}[* ,q_1]} \\
 \mathbf{\theta}[\diamond_i ,q_2]= \mathbf{A}[\diamond_i, q_2] + \sum_{*}{\alpha_{* \diamond_i} \mathbf{A}[*, q_2]}.
 \end{split}
\end{equation}
Here a natural problem comes out: for these two columns, they are not independent any more due to the Toeplitz structure. Here we use ``divide and conquer" technique that separates the sum into two groups that have no mutual terms. For instance, if $q_2> q_1, q_2-q_1 = d$, the sum can be divided as
\begin{equation}
 \begin{split}
\theta[\diamond_i, q_1]^T \mathbf{\theta}[\diamond_i, q_2]&=\begin{array}{c} \underbrace{\sum_{k=1}^{d}\theta[\diamond_i, q_1](k)\theta[\diamond_i, q_2](k+d) +\cdots} \\ \text{first group} \end{array}  \\
&\begin{array}{c} \underbrace{+\sum_{k=d+1}^{2d}\theta[\diamond_i, q_1](k)\theta[\diamond_i, q_2](k+d) + \cdots} \\ \text{second group} \end{array},
 \end{split}
\end{equation}
and it is always possible to find a partition that divides $\theta[\diamond_i, q_1]^T \mathbf{\theta}[\diamond_i, q_2]$ into two parts as sums with size $\{\frac{M}{2}, \frac{M}{2}\}$ for even $M$ and $\{\frac{M-1}{2}, \frac{M+1}{2}\}$ for odd $M$. Thus for some positive value $t_2$,
\begin{equation}
  \text{Pr}\left\{ \left| \theta[\diamond_i, q_1]^T \mathbf{\theta}[\diamond_i, q_2] \right| \geq t_2 \right\} \leq 4 \exp{\left( - \frac{t_2^2}{8 M a^4} \right)}.
\end{equation}

(4) When $p_1 \neq  p_2, q_1 \neq q_2 $,
similar to case (3), we divide $\mathbf{\theta}_{\diamond_i q_1}^T \mathbf{\theta}_{\diamond_j q_2}$ into $2$ parts without mutual terms, giving
\begin{equation}
  \text{Pr}\left\{ \left| \theta[\diamond_i, q_1]^T \mathbf{\theta}[\diamond_j, q_2] \right| \geq t_3 \right\} \leq 4 \exp{\left( - \frac{t_3^2}{8 M a^4} \right)}.
\end{equation}

Finally, we normalize the Gram matrix and summarize the diagonal elements and off-diagonal elements. Assume that $\mathbf{X}$ satisfies the \emph{statistical low rank property} defined in the \textit{Def. \ref{def:SLRP}} (see Appendix).
Suppose $a=\sqrt{c_0/M}\gamma_m$, $\gamma_m= \max(\gamma_i)$. Let $\epsilon_{21}= \epsilon_{21}'+\frac{\kappa_{ij}}{\gamma^2}, \epsilon_{21}'= \frac{t_1}{\gamma^2}$, $\gamma= \mathbb{E}(\gamma_i)$, and let $t_2 = \gamma^2 \epsilon_{22}, t_3= \gamma^2 \epsilon_{23}, \epsilon_{1}=\epsilon_{2m}= \frac{1}{4}\epsilon, \epsilon_2=\sum_{m}\epsilon_{2m}, m=\{1,2,3\}$. After tedious calculation of bounding bias factor $\gamma^2_i$ and $\kappa_{ij}$, we obtain the normalized result by summarizing the four cases above and adopting the Gershgorin theorem
\begin{equation}
\text{Pr}\left\{ \bigcup_{i=1}^{n_1n_2} \{ \sum_{j\neq i}^{r n_1}\left| g_{ij}   \right| \geq \epsilon_2  \}\right\} \leq 4 n_1^2 n_2^2 \exp{\left(  - \frac{ M \epsilon^2 }{128 c_0^2 r^2 n_1^2} \right)}.
\end{equation}
Hence there must exist a constant $0<c< \frac{\epsilon^2}{128 c_0^2}$ that
\begin{equation}
\text{Pr}\left\{  \text{Not Unique Recovery} \right\} \leq \exp{\left(  - \frac{c M  }{ r^2 n_1^2} \right)} \ \
\end{equation}
whenever
\begin{equation}
M  \geq \left( \frac{384 c_0^2}{\epsilon^2 -128c c_0^2} \right)r^2(n_1+n_2) \log(n_1 n_2),
\end{equation}
as $r^2<\mathcal{O}(n_1), n_2 =n_1^2, n_2 \rightarrow \infty$ then $\text{Pr} \rightarrow 0$. Use \emph{Prop.} \ref{prop:unique-rec},\ref{prop:epsilon_bound} to derive the last step, which completes the proof.
\end{proof}

\textbf{Remark}: 1) For Gaussian matrices with i.i.d. entries, the measurements for the recovery of a low rank matrix should be at least
$M \sim \mathcal{O} \left( r(n_1+n_2)\log(n_1 n_2) \right)$ \cite{RECHT-Guaranteed}. In our case, the measurement price is the extra factor $r$.

2) For random Gaussian matrices, one needs $\mathcal{O}(n_1n_2M)$ memory to store the operator. By using the piecewise Toeplitz structure, we are able to reduce the memory requirement to $\mathcal{O}\left((M+n_1)n_2\right)$ as $n_1<n_2<M$.

3) The proof holds when $r^2<\mathcal{O}(n_1), n_2 =n_1^2$ in order to avoid the situation that $M \geq n_1 n_2$. In practice it is feasible to apply $\mathbf{A}$ to the case when $n_1,n_2$ are close. Our simulations verify this conjecture numerically.

\section{Simulations}

Extensive simulations have been carried out to compare
the reconstruction performances of random and
proposed operators. Here we present some results.

We utilize $3$ different algorithms, including a) cvx toolbox to minimize the nuclear norm \cite{Boyd-ConvexOpt}; b) Alternating Least-Squares (ALS) algorithm \cite{Zachariah-ALSforLR} c) Directional-ALS algorithm \cite{Li-AlternatingStr}, to compare the reconstruction results, respectively. For each algorithm, we recover a $50 \times 50$ random low rank matrix using different $\mathbf{A}$ such as Gaussian/Bernoulli operators, 3-valued operators \cite{RECHT-Guaranteed} and finally random piecewise Toeplitz operators with truncated Gaussian entries. Fig. 1 depicts a comparison of reconstruction errors with increasing rank $r$ at sampling rate $\rho = M/{(n_1n_2)}=0.3$. Each point is recorded as an average of 200 trials. From these curves one can observe that the performance of the
proposed operator is close to that of random
matrices, which are typically considered as the optimal universal operators. In addition, the proposed operators may be equipped with fast reconstruction algorithms potentially by exploiting the Toeplitz structure like in CS \cite{CS-Toeplitz-algorithm-Yin}.

\section{Conclusion}
This paper proposes piecewise Toeplitz matrices as structured linear operators in the matrix minimization problem, and proves that it is feasible to recover the low rank matrix uniquely when the number of measurements exceeds $\mathcal{O}(r^2(n_1+n_2) \log(n_1 n_2))$ under mild assumptions. Experimental results show that
the proposed operators compare favorably with existing random operators.

\begin{figure}[!ht]
   \centering
   \begin{minipage}[ht]{1\linewidth}
   \centering
  \includegraphics[width=7.5cm]{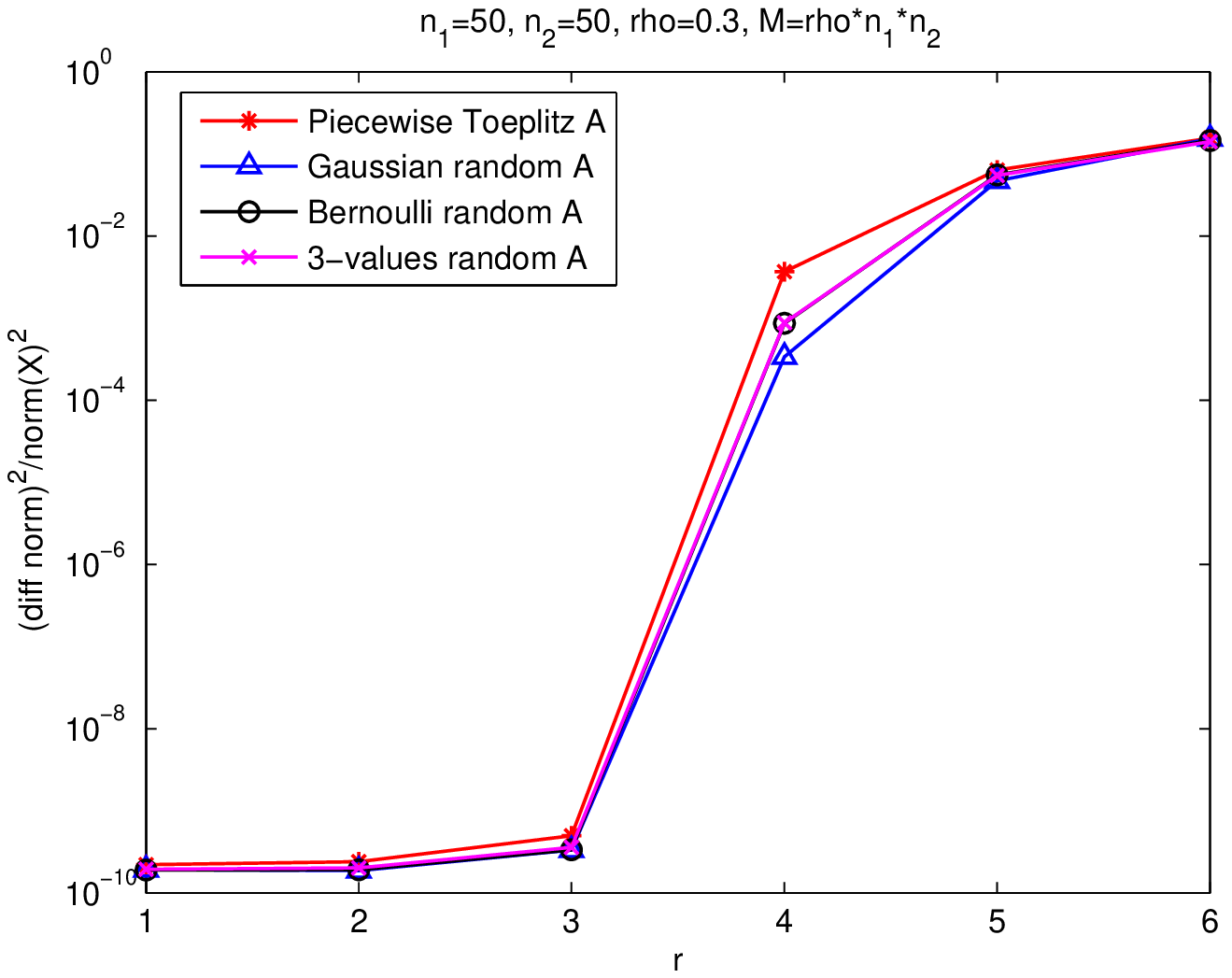} 
\centerline{(a)} 
   \end{minipage}
   \begin{minipage}[ht]{1\linewidth}
   \centering
   \includegraphics[width=7.5cm]{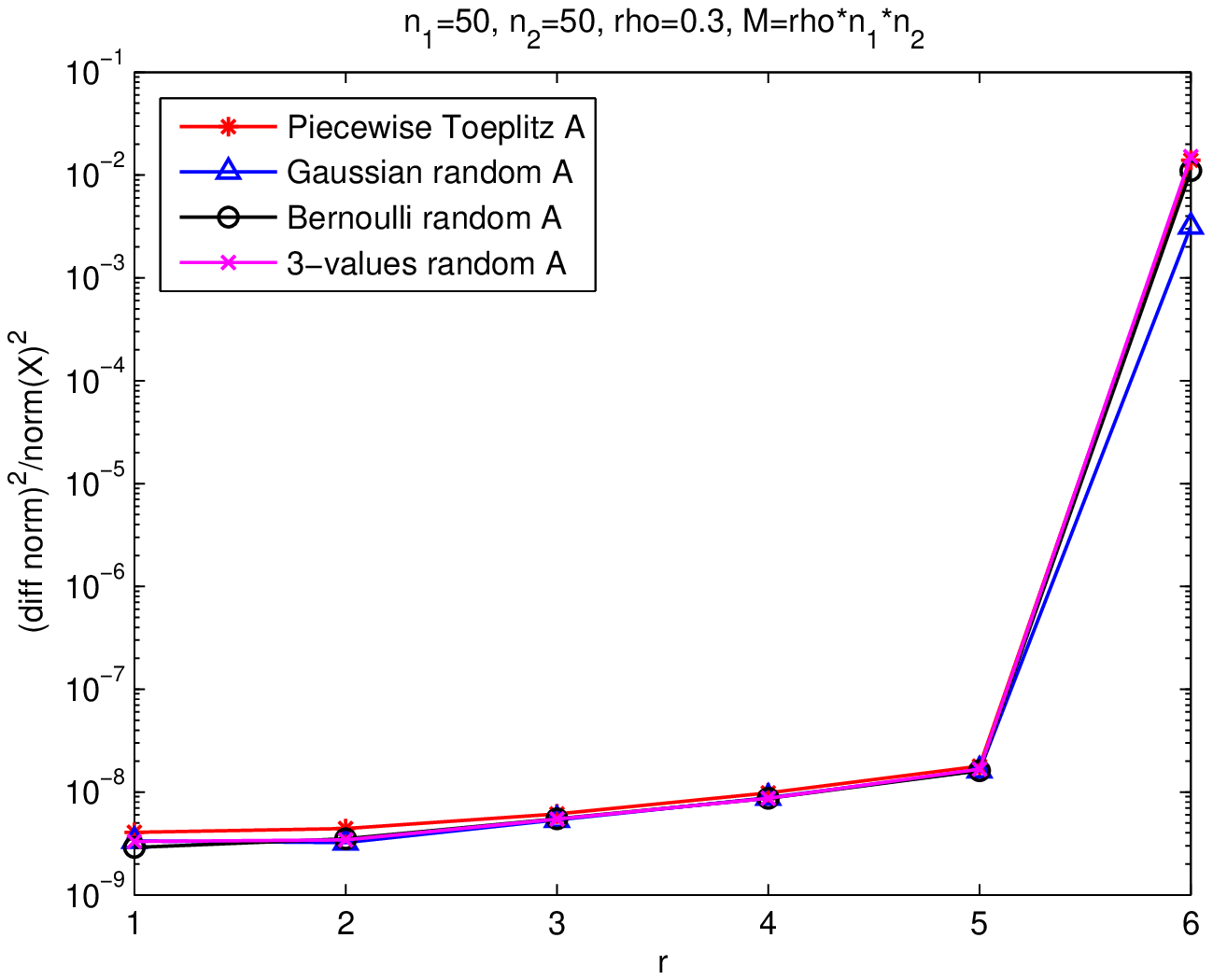} 
\centerline{(b)} 
   \end{minipage}
      \begin{minipage}[ht]{1\linewidth}
   \centering
   \includegraphics[width=7.5cm]{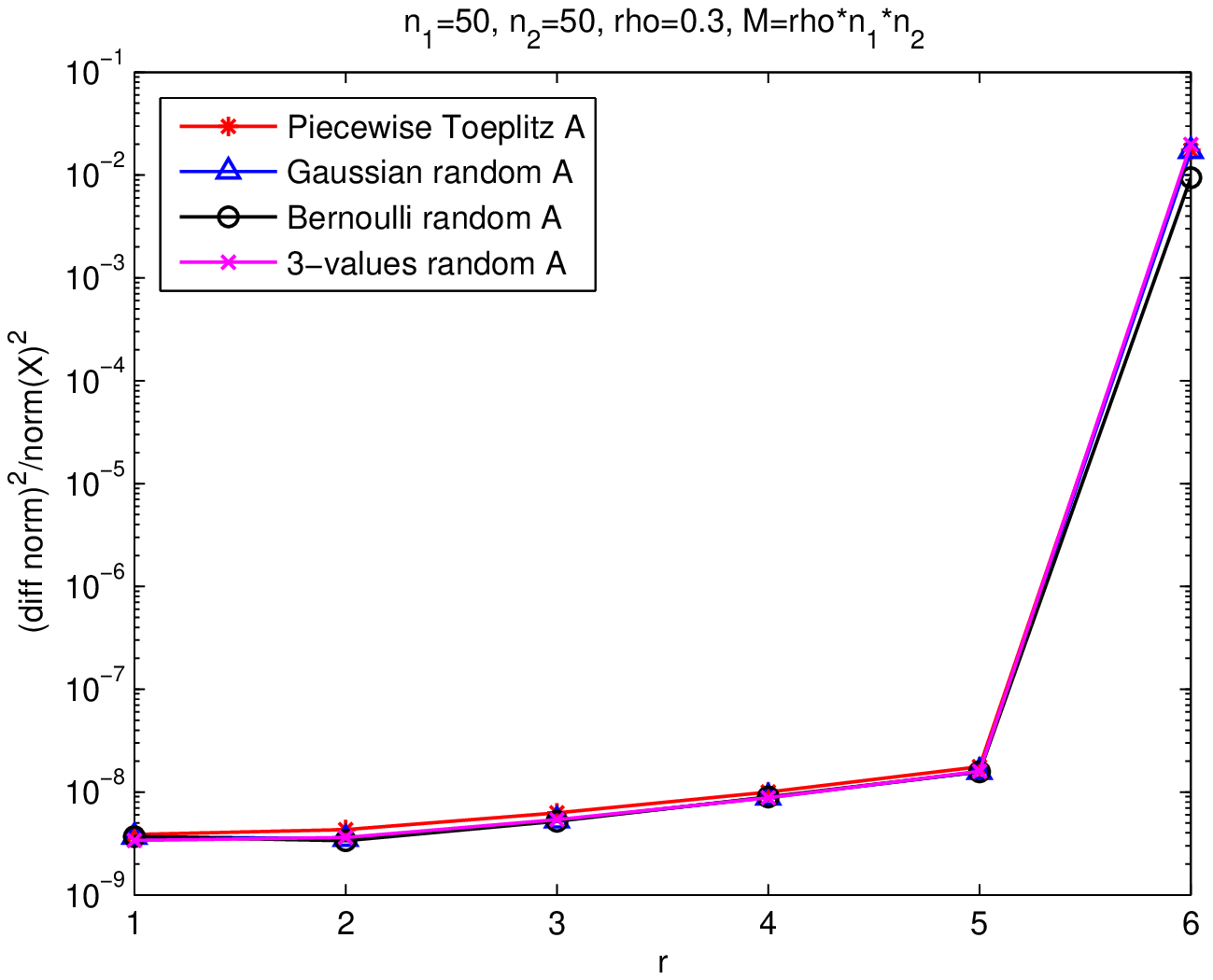} 
\centerline{(c)} 
   \end{minipage}
  \caption{Simulation results (a) using cvx toolbox to minimize the nuclear norm of $\mathbf{X}$  (b) using Alternating Least-Squares (ALS) algorithm (c) using Directional-ALS algorithm. } \vspace{-2mm}
  \end{figure}\label{fig:M-ALS_vs_ALS2}

\section{Appendix}
\textit{Condition 1} ($\mathbf{A}_j$) \\
$\mathbf{A}_j, j\in \{1,\dots, M\}$ is an $n_1 \times n_2$ matrix whose entries are bounded i.i.d. $0$-mean $1/M$-variance random variables satisfying $|\mathbf{A}_j(p,q)|\leq \sqrt{c_0/M}$ for some $c_0>1$, $p\in \{1, \cdots, n_1\}, q\in \{1, \cdots, n_2\}$; in addition, $\{\mathbf{A}_1, \cdots, \mathbf{A}_M\}$ are a set of piecewise Toeplitz matrices defined in Sec. \ref{sec:PTM}.

\textit{Assumption 1} ($\boldsymbol\alpha$) \\
$\boldsymbol\alpha$ is the parameter matrix of $r$-rank $\mathbf{X}$ defined in (\ref{eq:*=ad}). We assume that $\alpha_{{*_{p}}\diamond_{q}}, p \in \{1,\cdots, n-r\}, q \in \{1,\cdots, r\}$ satisfy the \emph{statistical low rank property} defined below.
\begin{deft}[\bf Statistical Low Rank Property]\label{def:SLRP}
A $r$-rank matrix $\mathbf{X}$ has the statistical low rank property if the $\mathbf{\alpha}_{ij}$ in (\ref{eq:*=ad}) are i.i.d. random variables with zero mean and variance $\sigma_{\alpha}^2$.
\end{deft}
In the proof of the main result we set $\sigma_{\alpha}^2=1/r$ to keep the variance of $\mathbf{x}_i$ identical due to (\ref{eq:*=ad}).


\begin{thebibliography}{10}
\providecommand{\url}[1]{#1}
\csname url@samestyle\endcsname
\providecommand{\newblock}{\relax}
\providecommand{\bibinfo}[2]{#2}
\providecommand{\BIBentrySTDinterwordspacing}{\spaceskip=0pt\relax}
\providecommand{\BIBentryALTinterwordstretchfactor}{4}
\providecommand{\BIBentryALTinterwordspacing}{\spaceskip=\fontdimen2\font plus
\BIBentryALTinterwordstretchfactor\fontdimen3\font minus
  \fontdimen4\font\relax}
\providecommand{\BIBforeignlanguage}[2]{{%
\expandafter\ifx\csname l@#1\endcsname\relax
\typeout{** WARNING: IEEEtran.bst: No hyphenation pattern has been}%
\typeout{** loaded for the language `#1'. Using the pattern for}%
\typeout{** the default language instead.}%
\else
\language=\csname l@#1\endcsname
\fi
#2}}
\providecommand{\BIBdecl}{\relax}
\BIBdecl

\bibitem{donoho-cs}
D.~L. Donoho, ``Compressed sensing,'' \emph{{IEEE} Trans. Inform. Theory},
  vol.~52, pp. 1289--1306, Jul. 2006.

\bibitem{RECHT-Guaranteed}
B.~Recht, M.~Fazel, and P.~Parillo, ``Guaranteed minimum rank solution of
  matrix equations via nuclear norm minimization,'' \emph{SIAM Rev.}, vol.~52,
  pp. 471--?01, 2007.

\bibitem{Candes-RobustPCA}
E.~J. Cand\'{e}s, X.~Li, Y.~Ma, and J.~Wright, ``Robust principal component
  analysis?'' \emph{Journal of the ACM}, vol.~58, no.~3, pp. 1--37, Jun. 2011.

\bibitem{Markovsky-StructuredLowRank}
I.~Markovsky, ``Structured low-rank approximation and its applications,''
  \emph{Automatica}, vol.~44, pp. 891--909, 2007.

\bibitem{Gross-QuantumState}
D.~Gross, Y.~Liu, S.~T. Flammia, S.~Becker, and J.~Eisert, ``Quantum state
  tomography via compressed sensing,'' \emph{Phys. Rev. Lett.}, vol. 105,
  no.~15, pp. 150\,401--150\,404, Oct. 2010.

\bibitem{Duarte-Structured}
M.~Duarte and Y.~Eldar, ``Structured compressed sensing: From theory to
  applications,'' \emph{{IEEE} Trans. Signal Processing}, vol.~59, no.~9, pp.
  4053--4085, Sep. 2011.

\bibitem{CS-Toeplitz-algorithm-Yin}
W.~Yin, S.~P. Morgan, J.~Yang, and Y.~Zhang, ``Practical compressive sensing
  with {Toeplitz} and circulant matrices,'' {Rice} {University} {CAAM}
  Technical Report {TR}10-01, 2010.

\bibitem{Kezhi-Convolutional-SP}
K.~Li, L.~Gan, and C.~Ling, ``Convolutional compressed sensing using
  deterministic sequences,'' \emph{Signal Processing, IEEE Transactions on},
  vol.~61, no.~3, pp. 740--752, 2013.

\bibitem{Forbes-OnIdentity}
M.~A. Forbes and A.~Shpilka, ``On identity testing of tensors, low-rank
  recovery and compressed sensing,'' \emph{Electronic Colloquium on
  Computational Comlexity Report}, no. 147, 2011.

\bibitem{Gross-RecoveringLow-Rank}
D.~Gross, ``Recovering low-rank matrices from few coefficients in any basis,''
  \emph{{IEEE} Trans. Inform. Theory}, vol.~57, no.~3, pp. 1548--1566, 2011.

\bibitem{Haupt-Toeplitz}
J.~Haupt, W.~Bajwa, G.~Raz, and R.~Nowak, ``Toeplitz compressed sensing
  matrices with applications to sparse channel estimation,'' \emph{{IEEE}
  Trans. Inform. Theory}, vol.~56, no.~11, pp. 5862--5875, Nov. 2010.

\bibitem{Eldar-Block-Spa}
Y.~C. Eldar, P.~Kuppinger, and H.~Bolcskei, ``Block-sparse signals: Uncertainty
  relations and efficient recovery,'' \emph{{IEEE} Trans. Signal Processing},
  vol.~58, no.~6, pp. 3042--3054, Jun. 2010.

\bibitem{Mahoney-CUR}
M.~W. Mahoney and P.~Drineas, ``{CUR} matrix decompositions for improved data
  analysis,'' \emph{Proceedings of the National Academy of Sciences}, vol. 106,
  no.~3, pp. 697--702, 2009.

\bibitem{Horn-Matrix}
R.~A. Horn and C.~R. Johnson, \emph{Matrix Analysis}.\hskip 1em plus 0.5em
  minus 0.4em\relax Cambridge University Press, 1990.

\bibitem{Boyd-ConvexOpt}
S.~Boyd and L.~Vandenberghe, \emph{Convex Optimization}.\hskip 1em plus 0.5em
  minus 0.4em\relax Cambridge University Press, 2004.

\bibitem{Zachariah-ALSforLR}
D.~Zachariah, M.~Sundin, M.~Jansson, and S.~Chatterjee, ``Alternating
  least-squares for low-rank matrix reconstruction,'' \emph{IEEE Signal
  Processing Letters}, vol.~19, no.~4, pp. 231--234, 2012.

\bibitem{Li-AlternatingStr}
K.~Li, M.~Sundin, C.~R. Rojas, S.~Chatterjee, and M.~Jansson, ``Alternating
  strategies are good for low rank matrix reconstruction,'' 2014, preprint,
  available online.

\end{thebibliography}



\end{document}